\pgfplotsset{compat=1.10}
\definecolor{Gray}{gray}{0.90}
\newtheorem{theorem}{\bf{Theorem}}[section]
\newtheorem{remark}[theorem]{\bf{Remark}}
\newenvironment{definition}[1][Definition]{\begin{trivlist}
\item[\hskip \labelsep {\bfseries #1}]}{\end{trivlist}}
\newcommand{\calN}[0]{\mathcal{N}}
\newcommand{\calD}[0]{\mathcal{D}}
\definecolor{amber}{rgb}{1.0, 0.75, 0.0}
\begin{document}

\title{\bf Strong Structural Controllability of Diffusively Coupled Networks: Comparison of Bounds Based on Distances and Zero Forcing}
\author{ Yasin Yaz{\i}c{\i}o\u{g}lu, Mudassir Shabbir, Waseem Abbas, and Xenofon Koutsoukos
\thanks{Yasin~Yaz{\i}c{\i}o\u{g}lu is with the Department of Electrical and Computer Engineering at the University of Minnesota, Minneapolis, MN, USA. Email: ayasin@umn.edu} 
\thanks{Mudassir~Shabbir is with the Computer Science Department at the Information Technology University, Lahore, Punjab, Pakistan. Email: mudassir@rutgers.edu}
\thanks{Waseem~Abbas and Xenofon Koutsoukos are with the Department of Electrical Engineering and Computer Science, Vanderbilt University, Nashville, TN, USA (e-mails: waseem.abbas@vanderbilt.edu, xenofon.koutsoukos@vanderbilt.edu).}
}

\maketitle
\begin{abstract}
We study the strong structural controllability (SSC) of
diffusively coupled networks, where the external control
inputs are injected to only some nodes, namely the leaders. For such systems, one measure of controllability is the dimension of strong structurally controllable subspace, which is equal to the smallest possible rank of controllability matrix under admissible (positive) coupling weights. In this paper, we compare two tight lower bounds on the dimension of strong structurally controllable subspace: one based on the distances of followers to leaders, and the other based on the graph coloring process known as zero forcing. We show that the distance-based lower bound is usually better than the zero-forcing-based bound when the leaders do not constitute a zero-forcing set. On the other hand, we also show that any set of leaders that can be shown to achieve complete SSC via the distance-based bound is necessarily a zero-forcing set. These results indicate that while the zero-forcing based approach may be preferable when the focus is only on verifying complete SSC, the distance-based approach is usually more informative when partial SSC is also of interest. Furthermore, we also present a novel bound based on the combination of these two approaches, which is always at least as good as, and in some cases strictly greater than, the maximum of the two bounds. We support our analysis with numerical results for various graphs and leader sets.

\end{abstract}


\section{Introduction}

Networks of diffusively coupled agents, where each node's state is attracted toward the weighted average of its neighbors' states, appear in numerous systems such as sensor networks, distributed robotics, power grids, social networks, and biological systems. Such systems are often modeled by using their interaction graphs where the nodes represent the agents, and the weighted edges denote the couplings among agents. One major research question regarding such systems is whether a desired global behavior can be induced by injecting external inputs to only a subset of agents, so called the leaders. This question has motivated numerous studies on relating \textit{network controllability} to the structure of the interaction graph. Various graph theoretic tools have been utilized to provide topology-based characterizations of network controllability. Examples include \textit{equitable partitions} (e.g., \cite{Rahmani09}), \textit{maximum matchings} (e.g., \cite{Liu11,Chapman13}), \textit{centrality based measures} (e.g., \cite{Liu12,Pan14}), \textit{dominating sets} (e.g., \cite{Nacher14}), \textit{distances} (e.g.,  \cite{Zhang14,Yasin12CDC,yazicioglu2016graph,Van2017distance}), and \textit{zero forcing} (e.g., \cite{work2008zero,monshizadeh2014zero,trefois2015zero,monshizadeh2015strong,mousavi2018structural}). 



In this paper, we focus on the strong structural controllability of diffusively coupled networks. More specifically, we consider the dimension of \emph{strong structurally controllable subspace (SSCS)}, i.e., the minimum possible rank of controllability matrix under weighted Laplacian dynamics, as a measure of controllability. Two graph theoretic concepts are  known to yield a tight lower bound on this measure: distances and zero forcing. In this paper, we first compare these two approaches. We characterize various cases where the distance-based lower bound is greater than the zero-forcing-based bound. On the other hand, we also show that, for any network of $n$ nodes, any set of leaders that makes the distance-based bound equal to $n$ is necessarily a zero forcing set, i.e., they also make the zero-forcing-based bound equal to $n$. These results indicate that while the zero-forcing-based approach is better for verifying complete strong structural controllability, the distance-based approach is usually more informative when the leaders do not constitute a zero forcing set. We also propose a novel bound based on the combination of these two methods, which is always at least as good as, and in some cases greater than, the maximum of the two bounds. Finally, we support our analysis with some numerical results. 


The organization of this paper is as follows: Section \ref{prelim} provides some preliminaries. Section \ref{compare} presents our results regarding the comparison of bounds. Section \ref{combine} provides a novel bound based on the combination of distance-based and zero-forcing-based methods. Some numerical results are given in Section \ref{sims}. Finally, Section \ref{conc} concludes the paper. 



\section{Preliminaries}
\label{prelim}
\subsection{Graph Basics}

We consider a network represented by a simple directed graph $G=(V,E)$ where the node set $V = \{v_1,v_2,\ldots,v_n \}$ represent agents, and the edge set $E$ represents interconnections between agents. An edge from a node $v_i\in V$ to a node $v_j\in V$ is denoted by $e_{ij}$. The \emph{out-neighborhood} of node $v_i$ is ${\calN_i\triangleq\{v_j\in V: e_{ij} \in E\}}$. The \emph{in-neighborhood} of node $v_i$ is ${\calN_i\triangleq\{v_j\in V: e_{ji} \in E\}}$. The \emph{distance} $d(v_i,v_j)$, is simply the number of edges on the shortest path from $v_i$ to $v_j$. Accordingly, $d(v_i,v_i)=0$ and $d(v_i,v_j)=\infty$ if there is no path from $v_i$ to $v_j$. The graph is strongly connected if there is a path from any node to any other node. The \emph{weight function} $w:E\rightarrow \mathbb{R}^+$
assigns a positive weight $w(e_{ij})$ to each edge $e_{ij}$, which will denote how strongly $v_i$ is influenced by $v_j$ in the dynamical model below.


\subsection{System Model}

While the model can easily be extended to agents with higher-dimensional states, for the sake of simplicity let each agent ${v_i\in V}$ have a state $x_i\in\mathbb{R}$. The overall state of the system is ${x = \left[\begin{array}{cccc}x_1 & x_2 & \cdots & x_n\end{array}\right]^T\in\mathbb{R}^n}$. 
The states evolve under the weighted Laplacian dynamics, 
\begin{equation}
\label{eq:dynamics}
\dot{x} = -L_wx + B u,
\end{equation}
where $L_w\in\mathbb{R}^{n\times n}$ is the weighted \emph{Laplacian matrix} of $G$ and is defined as $L_w = \Delta - A_w$.  Here, $A_w\in\mathbb{R}^{n\times n}$ is the weighted \emph{adjacency matrix} defined as
\begin{equation}
\label{eq:A_matrix}
\left[ A_w \right]_{ij} =
\left\lbrace
\begin{array}{ccc}
w(e_{ij}) & \text{if } e_{ij}\in E,\\
0 	   & \text{otherwise,}
\end{array}
\right.
\end{equation}
and $\Delta\in\mathbb{R}^{n\times n}$ is the \textit{degree matrix} whose entries are 
\begin{equation}
\label{eq:Delta_matrix}
\left[ \Delta \right]_{ij} =
\left\lbrace
\begin{array}{ccc}
\sum_{k = 1}^{n} A_{ik}& \text{if } i=j\\
0 	   & \text{otherwise.}
\end{array}
\right.
\end{equation}
The matrix $B\in\mathbb{R}^{n\times m}$ in \eqref{eq:dynamics} is an \textit{input matrix}, where $m$ is the number of  leaders (inputs), which are the nodes to which an external control signal is applied. Let ${V_\ell = \{\ell_1,\ell_2,\cdots,\ell_m\}\subseteq V}$
be the set of \emph{leaders}, then
\begin{equation}
\label{eq:B_matrix}
\left[ B \right]_{ij} =
\left\lbrace
\begin{array}{ccc}
1& \text{if } v_i = \ell_j\\
0 	   & \text{otherwise.}
\end{array}
\right.
\end{equation}

\subsection{Strong Structural Controllability}
\label{sec:SSC}
A state $x_f\in\mathbb{R}^n$ is a \emph{reachable state} if there exists an input $u$ that can drive the network in \eqref{eq:dynamics} from the origin to $x_f$ in a finite amount of time. A network $G=(V,E)$ in which edges are assigned weights according to the weight function $w$, and contains $V_\ell\subseteq V$ leaders is called \emph{completely controllable} if every point in $\mathbb{R}^n$ is reachable. Complete controllability can be checked via the rank of \emph{controllability matrix}, i.e.,
\begin{equation}
\small
\Gamma(L_w,V_\ell) =
\left[
\begin{array}{ccccc}
B & (-L_w)B & (-L_w)^2B & \cdots & (-L_w)^{n-1}B
\end{array}
\right],
\end{equation}
where $B$ is defined as in \eqref{eq:B_matrix}. The network is completely controllable if and only if the rank of $\Gamma(L_w,V_\ell)$ is $n$, and in such case $(L_w,B)$ is called a \emph{controllable pair}. Note that edges in $G$ define the \emph{structure}---location of zero and non-zero entries in the Laplacian matrix---of the underlying graph, for instance, see Fig.~\ref{fig:Lap}. For any given graph $G=(V,E)$ and $V_\ell$ leaders, the rank of controllability matrix depends on the weights assigned to edges.

\begin{figure}[htb]
\centering
\includegraphics[scale=0.9]{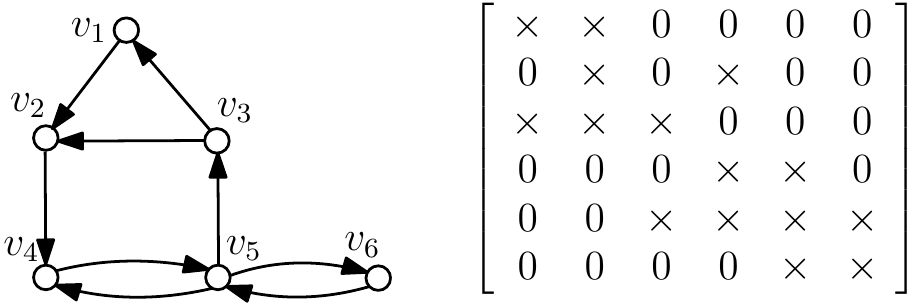}
\caption{A graph and its structured Laplacian whose non-zero off-diagonal entries are positive and rows sum to zero.}
\label{fig:Lap}
\end{figure}

A network $G=(V,E)$ with $V_\ell$ leaders is \emph{strong structurally controllable} if $(L_w,B)$ is a controllable pair for any choice of weight function $w$. The \emph{dimension of strong structurally controllable subspace (SSCS)}, denoted by $\gamma(G,V_\ell)$, is the smallest possible rank of controllability matrix under feasible weights, i.e.,
\begin{equation}
\label{eq:dim_SSC}
\gamma(G,V_\ell) = \min\limits_{w:E\rightarrow \mathbb{R}^+} \left(\text{rank}\;\Gamma(L_w,V_\ell)\right),
\end{equation}
where the minimum is taken over all feasible weight functions $w:E\rightarrow \mathbb{R}^+$. Roughly, $\gamma(G,V_\ell)$ quantifies how much of the network can be controlled through the leaders $V_\ell$ under any feasible choice of edge weights.

\begin{remark}
The original notion of strong structural controllability \cite{mayeda1979strong} considers the worst-case controllability under any allocation of the non-zero values in a system's structure matrix. Our focus is on the worst-case controllability under weighted Laplacian dynamics, which narrows down the feasible set of system matrices. However, it is worth mentioning that both the distance-based and the zero-forcing-based methods, which will be explained next, are actually applicable to more generalized dynamics (e.g., \cite{Van2017distance,monshizadeh2015strong}).
\end{remark}

\subsection{Distance-based Lower Bound: $\delta(G,V_\ell)$}
\label{sec:PMI_bound}
 Given a network with $m$ leaders $V_\ell = \{\ell_1,\cdots,\ell_m\}$, we define the \emph{distance-to-leaders} (DL) vector of each $v_i\in V$ as
\begin{equation*}
\label{eq:DLvector}
D_i = \left[
\begin{array}{ccccc}
d(v_i, \ell_1) & d(v_i,\ell_2)  & \cdots & d(v_i,\ell_m)
\end{array}
\right]^T \in \mathbb{Z}^m.
\end{equation*}
The $j^{th}$ component of $D_i$, denoted by $[D_i]_j$, is equal to the distance of $v_i$ to $\ell_j$. Next, we provide the definition of  \emph{pseudo-monotonically increasing} sequences of DL vectors.

\begin{definition}{(\emph{Pseudo-monotonically Increasing (PMI) Sequence)}}
\label{def:PMI}
A sequence of distance-to-leaders vectors $\calD$ is PMI if for a vector $\calD_i$ in the sequence, there exists some $\pi(i)\in\{1,2,\cdots,m\}$ such that
\begin{equation}
\label{eq:PMIcondition}
[\calD_i]_{\pi(i)} < [\calD_{j}]_{\pi(i)}, \;\;\forall j > i.
\end{equation}
We say that $\calD_i$ satisfies the \emph{PMI property} at coordinate $\pi(i)$ whenever $[\calD_i]_{\pi(i)} < [\calD_{j}]_{\pi(i)},\;\forall j>i$.
\end{definition}

An example of DL vectors is illustrated in Fig. \ref{fig:PMI}, where a PMI sequence of length six can be constructed as

\begin{equation}
\label{eq:PMIexample}
\calD =
\left\{
\left[
\begin{array}{c}
3\\ \textcircled{0}
\end{array}
\right]
,
\left[
\begin{array}{c}
\textcircled{0} \\ 4
\end{array}
\right]
,
\left[
\begin{array}{c}
\textcircled{1} \\ 4
\end{array}
\right]
,
\left[
\begin{array}{c}
2\\ \textcircled{1}
\end{array}
\right]
,
\left[
\begin{array}{c}
\textcircled{3} \\ 2
\end{array}
\right],
\left[
\begin{array}{c}
4 \\ 3
\end{array}
\right]
\right\}.
\end{equation}

Indices of circled values in \eqref{eq:PMIexample} are the coordinates, $\pi(i)$, at which the corresponding distance-to-leaders vectors are satisfying the PMI property.
The longest PMI sequence of distance-to-leaders vectors is related to the dimension of SSCS as stated in the following result.
\begin{figure}[htb]
\centering
\includegraphics[scale=0.9]{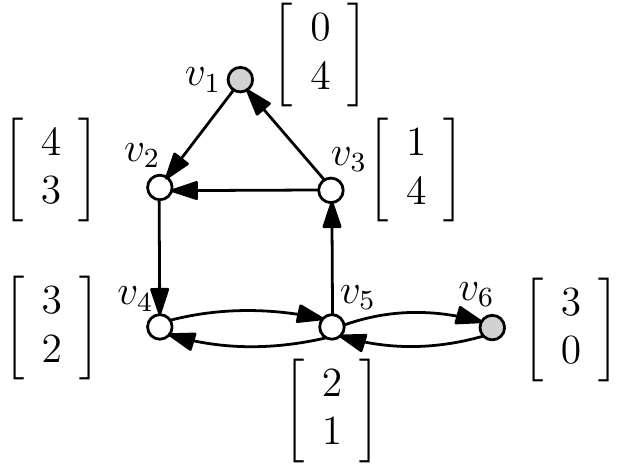}
\caption{A network with two leaders, $V_\ell = \{v_1,v_6\}$, and the corresponding distance-to-leaders (DL) vectors.}
\label{fig:PMI}
\end{figure}
\begin{theorem} \cite{yazicioglu2016graph}
\label{thm:PMI}
Consider any network $G=(V,E)$ with the leaders $V_\ell \subseteq V$. Let $\delta(G,V_\ell)$ be the length of longest PMI sequence of distance-to-leaders vectors with at least one finite entry.  Then,
\begin{equation}
\label{eq:PMI_bound}
\delta(G,V_\ell) \le \gamma(G,V_\ell).
\end{equation}
\end{theorem}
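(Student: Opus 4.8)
The plan is to prove the inequality directly by showing that, for \emph{every} feasible weight function $w$, the controllability matrix $\Gamma(L_w,V_\ell)$ contains a nonsingular $k\times k$ submatrix, where $k=\delta(G,V_\ell)$ is the length of a longest PMI sequence of DL vectors with a finite entry. Since this lower-bounds $\mathrm{rank}\,\Gamma(L_w,V_\ell)$ uniformly over all admissible $w$, taking the minimum in \eqref{eq:dim_SSC} yields $\delta(G,V_\ell)\le\gamma(G,V_\ell)$.

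The technical core is a structural fact relating the sparsity pattern of powers of $-L_w$ to distances in $G$. Writing $[(-L_w)^r]_{ab}$ as a sum over length-$r$ walks from $a$ to $b$, where each walk step either stays at a vertex (contributing the nonpositive diagonal entry of $-L_w$) or traverses an edge of $G$ (contributing a positive weight), I would argue: (i) if $d(a,b)>r$ then no walk of length $r$ can reach $b$ from $a$, so $[(-L_w)^r]_{ab}=0$ exactly; and (ii) if $d(a,b)=r$ then every contributing walk must consist of $r$ edge steps and no stays, and a shortcutting argument (removing any repeated vertex strictly shortens the walk, contradicting $d(a,b)=r$) shows each such walk is a shortest $a$--$b$ path, so $[(-L_w)^r]_{ab}$ is a sum of products of positive weights and is strictly positive. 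Note that a finite distance in an $n$-node graph is at most $n-1$, so all the powers we will invoke lie in $\{0,1,\dots,n-1\}$ and the corresponding columns genuinely appear in $\Gamma(L_w,V_\ell)$.

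Next I would exploit the PMI sequence. Let $\calD_{i_1},\dots,\calD_{i_k}$ be a longest PMI sequence in its given order, let $v_{i_t}$ be the node whose DL vector is $\calD_{i_t}$, and set $j_t=\pi(i_t)$ and $r_t=[\calD_{i_t}]_{j_t}=d(v_{i_t},\ell_{j_t})<\infty$. Form the $k\times k$ submatrix of $\Gamma(L_w,V_\ell)$ whose columns are $(-L_w)^{r_t}Be_{j_t}$ (the column of the $r_t$-th block corresponding to leader $\ell_{j_t}$) and whose rows are indexed by $v_{i_1},\dots,v_{i_k}$, in the same order. Its $(s,t)$ entry is $[(-L_w)^{r_t}]_{v_{i_s},\ell_{j_t}}$. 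On the diagonal, $d(v_{i_t},\ell_{j_t})=r_t$, so by part (ii) the entry is strictly positive; for $s>t$, the PMI property at coordinate $j_t$ gives $d(v_{i_s},\ell_{j_t})=[\calD_{i_s}]_{j_t}>[\calD_{i_t}]_{j_t}=r_t$, so by part (i) the entry is $0$. Hence the submatrix is triangular with a strictly positive diagonal, thus nonsingular, and $\mathrm{rank}\,\Gamma(L_w,V_\ell)\ge k$.

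I expect the only delicate point to be the strict positivity in part (ii): because $-L_w$ has both positive (off-diagonal) and negative (diagonal) entries, a naive walk expansion could in principle suffer sign cancellations, so one must first rule out any stay steps and any vertex repetitions when $d(a,b)=r$, thereby ensuring all surviving terms are positive and cannot annihilate one another. Everything else---triangularity, the bound $r_t\le n-1$, and the passage from a nonsingular submatrix to the rank bound and then to \eqref{eq:PMI_bound}---is routine once the rows and columns are ordered to follow the PMI sequence.
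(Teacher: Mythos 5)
Your argument is correct and is essentially the proof given in the cited reference \cite{yazicioglu2016graph}; the paper itself states Theorem \ref{thm:PMI} by citation and does not reproduce a proof. The two key ingredients you identify---that $[(-L_w)^r]_{ab}=0$ when $d(a,b)>r$ and $[(-L_w)^r]_{ab}>0$ when $d(a,b)=r$ (since every contributing length-$r$ walk must then use only edge steps, so no sign cancellation occurs), combined with selecting the columns $(-L_w)^{r_t}Be_{j_t}$ to obtain a triangular submatrix with positive diagonal---are exactly the mechanism of the original proof, and your handling of the finiteness of $r_t$ and the bound $r_t\le n-1$ closes the only delicate points.
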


\begin{remark}
While the bound in \eqref{eq:PMI_bound} was presented for connected undirected graphs in Theorem 3.2 in \cite{yazicioglu2016graph}, it also holds for any choice of leaders on strongly connected graphs as shown in Remark 3.1 in \cite{yazicioglu2016graph}. Such connectivity properties already ensure that all DL vectors have only finite entries. The bound can easily be extended to graphs without strong connectivity by excluding the DL vectors of all $\infty$, which belong to followers that can not be influenced by any leader.
\end{remark}




 
 


\subsection{Zero-forcing-based Lower Bound: $\zeta(G,V_\ell)$}
We first give the definitions of \emph{zero forcing process} and \emph{derived set}.

\begin{definition}{\em (Zero Forcing Process)}
Given a graph $G=(V,E)$ where each node is initially colored either \textit{white} or \textit{black}, zero forcing process is defined by the following coloring rule: if $v\in V$ is colored black and has exactly one white in-neighbor $u$, then the color of $u$ is changed to black and $u$ is said to be \emph{infected} by $v$.
\end{definition}

\begin{definition}{\em (Derived Set)}
Given an initial set of black nodes $V'\subseteq V$ (called the \textit{input set}) in a graph $G=(V,E)$, there exists a unique derived set, $\text{dset}(G,V') \subseteq V$, which is the resulting set of black nodes when no further color changes are possible under the zero forcing process. An input set $V'$ is called a \emph{zero forcing set} (ZFS) if $\text{dset}(G,V')=V$. 
\end{definition}

\begin{theorem}\cite{monshizadeh2015strong}
For any network $G=(V,E)$ with the leaders $V_\ell \subseteq V$,
\begin{equation}
\label{eq:z_bound}
\zeta(G,V_\ell) \le \gamma(G,V_\ell),
\end{equation}
where $\zeta(G,V_\ell)= |\text{dset}(G,V_\ell)|$ is the size of the derived set corresponding to the input set $V_\ell$.
\end{theorem}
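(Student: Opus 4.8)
The plan is to prove the inequality through the orthogonal-complement (unobservable-subspace) description of $\mathrm{range}\,\Gamma$, which has the advantage of never requiring us to track the (possibly large) supports of the Krylov vectors $(-L_w)^kB$. Fix any feasible weight function $w$ and set $N_w := \mathrm{range}\big(\Gamma(L_w,V_\ell)\big)^\perp$. By definition $\text{rank}\,\Gamma(L_w,V_\ell) = n-\dim N_w$; moreover $N_w$ is the unobservable subspace of the pair $\big((-L_w)^T,B^T\big)$, so it is $(-L_w)^T$-invariant and contained in $\ker B^T=\mathrm{span}\{e_v : v\in V\setminus V_\ell\}$ (the $k=0$ block of $\Gamma$). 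Hence it suffices to show, for every feasible $w$, that every $x\in N_w$ vanishes on all nodes of $\text{dset}(G,V_\ell)$, i.e.\ $N_w\subseteq\mathrm{span}\{e_v : v\in V\setminus\text{dset}(G,V_\ell)\}$: this gives $\dim N_w\le n-\zeta(G,V_\ell)$, hence $\text{rank}\,\Gamma(L_w,V_\ell)\ge\zeta(G,V_\ell)$, and minimizing over $w$ yields $\gamma(G,V_\ell)\ge\zeta(G,V_\ell)$.

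To prove this containment I would serialize the zero forcing process. Write $\text{dset}(G,V_\ell)=\{z_1,\dots,z_\zeta\}$ with $z_1,\dots,z_m$ the leaders (in arbitrary order) and $z_{m+1},\dots,z_\zeta$ listed in the order in which they turn black, so that for each $k>m$ the node $z_k$ is infected by an already-black node $p_k=z_{\sigma(k)}$ with $\sigma(k)<k$. By the coloring rule $z_k$ is an in-neighbor of $p_k$, so $e_{z_kp_k}\in E$ and $w(e_{z_kp_k})>0$; every in-neighbor of $p_k$ other than $z_k$ is black at that moment and therefore lies in $\{z_1,\dots,z_{k-1}\}$, and $p_k$ itself lies in $\{z_1,\dots,z_{k-1}\}$ as well. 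I would then prove by induction on $k\in\{1,\dots,\zeta\}$ that $x_{z_k}=0$ for every $x\in N_w$, assuming the analogous statement for all smaller indices. For $k\le m$, $z_k$ is a leader and $N_w\subseteq\ker B^T$, so $x_{z_k}=0$. For $k>m$, fix any $x\in N_w$; since $N_w$ is $(-L_w)^T$-invariant, $y:=(-L_w)^Tx\in N_w$, and the induction hypothesis applied to $y$ gives $y_{z_i}=0$ for all $i<k$, in particular $y_{p_k}=0$. Writing out this coordinate,
\begin{equation}
0 = y_{p_k} = [-L_w]_{p_kp_k}\,x_{p_k} + \sum_{q:\, e_{qp_k}\in E} w(e_{qp_k})\,x_q ,
\end{equation}
and by the induction hypothesis every index occurring on the right other than $q=z_k$ belongs to $\{z_1,\dots,z_{k-1}\}$, so all those terms vanish and we are left with $w(e_{z_kp_k})\,x_{z_k}=0$; as $w(e_{z_kp_k})>0$ this forces $x_{z_k}=0$, completing the induction.

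The step I expect to be the crux is this last non-cancellation argument, which is also the only place the weighted-Laplacian structure enters beyond the zero/nonzero pattern of $L_w$: once all the $\{z_1,\dots,z_{k-1}\}$-contributions to $y_{p_k}=0$ are known to vanish, the lone surviving term must carry a nonzero coefficient, and this is exactly what positivity of the edge weights guarantees (the off-diagonal entries of $-L_w$ are the weights $w(e_{ij})>0$, so nothing can cancel the newly infected node $z_k$). For a general structured matrix with unconstrained nonzero entries this would not go through, consistent with the bound being stated for Laplacian dynamics. As a consistency check, when $V_\ell$ is a zero forcing set we have $\text{dset}(G,V_\ell)=V$, the induction forces $N_w=\{0\}$, and we recover complete strong structural controllability; the present statement is precisely the partial refinement obtained by halting the induction at the derived set.
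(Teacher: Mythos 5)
Your proof is correct, but it is worth noting that the paper does not actually prove this statement: it simply defers to Lemma~4.2 of the cited reference \cite{monshizadeh2015strong}, which establishes that the controllable subspace contains a $|\text{dset}(G,V_\ell)|$-dimensional coordinate subspace for a whole qualitative class of matrices containing the weighted Laplacians. What you have done is reconstruct, from scratch, essentially the argument underlying that lemma: pass to the orthogonal complement $N_w=\mathrm{range}(\Gamma)^\perp$, identify it as the unobservable subspace of $\big((-L_w)^T,B^T\big)$ (hence $(-L_w)^T$-invariant and contained in $\ker B^T$), serialize the zero forcing process, and induct along the infection order to show every $x\in N_w$ vanishes on the derived set. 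The induction step is sound: since $z_k$ is the \emph{unique} white in-neighbor of $p_k$ at the moment of infection, all other indices appearing in the expansion of $y_{p_k}=\big((-L_w)^Tx\big)_{p_k}$ lie in $\{z_1,\dots,z_{k-1}\}$ and are killed by the induction hypothesis, leaving $w(e_{z_kp_k})x_{z_k}=0$. One small correction to your closing commentary: positivity of the weights is not what saves this step --- only a single term survives, so there is nothing to cancel against, and mere nonzeroness of the coefficient $[-L_w]_{z_kp_k}$ suffices. This is precisely why the cited lemma holds for arbitrary pattern matrices with nonzero entries on the edge positions, consistent with the paper's own remark that the zero-forcing bound applies to more general dynamics than weighted Laplacians; your proof, read carefully, already establishes that stronger statement. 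The trade-off is clear: the paper's citation is shorter and inherits the full generality of \cite{monshizadeh2015strong}, while your self-contained argument makes the mechanism (and the exact role of the sparsity pattern) explicit.
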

\begin{proof}
Proof follows from Lemma 4.2 in \cite{monshizadeh2015strong}, which shows that for a set of state matrices including weighted Laplacians as a subset, the controllable subspace always contains a $|\text{dset}(G,V_\ell)|$-dimensional subspace. 
\end{proof}

\subsection{Computation of the Bounds}
For any given network with $n$ nodes and $m$ leaders, all pair-wise distances can be computed in $O(n^3)$ time (e.g., \cite{floyd1962algorithm}). Given the distances, $\delta(G,V_\ell)$ can be computed in ${O(m(n\log n + n^m))}$ time \cite{shabbir2019computation}. When the number of leaders makes this computation intractable, an approximation (underestimation), which was shown to be very close to the exact value on various networks, can be obtained in $O(mn\log n)$ time \cite{shabbir2019computation}. On the other hand, $\zeta(G,V_\ell)$ can be computed in ${O(n^2)}$ time by  recursively applying the coloring rule to the in-neighbors of infected nodes until no further color change is possible.




\section{Comparison of Bounds}
\label{compare}
In this section, we compare the distance-based bound, $\delta(G,V_\ell)$, and the zero-forcing-based bound, $\zeta(G,V_\ell)$. 
It is worth mentioning that both $\delta(G,V_\ell)$ and $\zeta(G,V_\ell)$ are tight bounds. For instance, in the case of undirected graphs, any path graph in which one of the end nodes is a leader, or any cycle graph in which two adjacent nodes are leaders satisfy ${\zeta(G,V_\ell) = \delta(G,V_\ell) = \gamma(G,V_\ell) = n}$. Furthermore, neither of these two tight bounds is guaranteed to be at least as good as the other in all possible cases. We provide one example for $\zeta(G,V_\ell)>\delta(G,V_\ell)$ and one example for $\delta(G,V_\ell)>\zeta(G,V_\ell)$ in Fig.  \ref{fig:pvszex}. Accordingly, we aim to identify when one bound may be preferable to the other.
\begin{figure}[htb]
\centering
\includegraphics[scale=1]{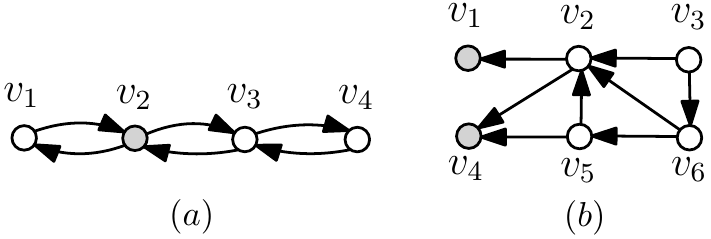}
\caption{Two networks and their leaders show in gray. For the network in (a),  $\delta(G,V_\ell)=3$, $\zeta(G,V_\ell)=1$.  For the network in (b),  $\delta(G,V_\ell)=5$, $\zeta(G,V_\ell)=6$. 
}
\label{fig:pvszex}
\end{figure}

\subsection {Advantages of Using the Distance-based Bound}
We will present two results, Theorems \ref{P>Z} and \ref{P>ZZ}, identifying some rich cases where $
\delta(G,V_\ell) >\zeta(G,V_\ell)$.  Later in Section \ref{sims}, we will also provide numerical results showing that $
\delta(G,V_\ell)$ is actually significantly greater than $\zeta(G,V_\ell)$ in many cases that are not limited to those captured by Theorems \ref{P>Z} and \ref{P>ZZ}. Our first result in this section shows that $
\delta(G,V_\ell)$ is greater than $\zeta(G,V_\ell)$ whenever each leader has at least two followers as in-neighbors. Note that this condition is very likely to occur when a small number of leaders are scattered over a large graph where most nodes have an in-degree of two or more (e.g., most regular graphs, random graphs, scale-free networks). 
\begin{theorem}
\label{P>Z}
Consider any graph $G = (V,E)$ with $n$ nodes and $m$ leaders $V_\ell \subseteq V$. If each leader has at least two followers as in-neighbors, then $
\delta(G,V_\ell) > \zeta(G,V_\ell)$.
\end{theorem}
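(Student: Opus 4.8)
The plan is to compute $\zeta(G,V_\ell)$ exactly and then produce an explicit PMI sequence that is one vector longer. The crucial remark for the first step is that the hypothesis makes the zero forcing process \emph{stall immediately}. At the start only the leaders are black. Any leader $\ell$ has, by assumption, at least two follower in-neighbors, and these are white; thus $\ell$ has at least two white in-neighbors and cannot infect any node. Since only black nodes can infect, no color change occurs at the first step, and therefore none ever occurs. Hence $\text{dset}(G,V_\ell)=V_\ell$ and $\zeta(G,V_\ell)=m$.

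For the second step, fix any leader $\ell^{\ast}\in V_\ell$ and, using the hypothesis, a follower $f$ that is an in-neighbor of $\ell^{\ast}$; then $d(f,\ell^{\ast})=1$. Index the leaders as $\ell_1,\dots,\ell_m$ and consider the sequence of DL vectors $\calD=(D_{\ell_1},\dots,D_{\ell_m},D_f)$. I claim $\calD$ is PMI. For each $i\le m$, the vector $D_{\ell_i}$ satisfies the PMI property at coordinate $i$: its $i$-th entry is $d(\ell_i,\ell_i)=0$, whereas every vector appearing after it is the DL vector of a node different from $\ell_i$ and hence has $i$-th entry at least $1$. The last vector $D_f$ has no successor, so it satisfies the PMI property vacuously, and it has a finite entry, namely a $1$ in the coordinate associated with $\ell^{\ast}$. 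Thus $\calD$ is a valid PMI sequence of length $m+1$ all of whose vectors have a finite entry, so $\delta(G,V_\ell)\ge m+1$.

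Putting the two steps together gives $\delta(G,V_\ell)\ge m+1>m=\zeta(G,V_\ell)$, which is the claim. The argument is elementary; the only points to be careful about are confirming that ``no infection is ever possible'' pins $\zeta$ down to exactly $m$ (and not merely bounds it from below), and verifying that the exhibited sequence literally meets Definition~\ref{def:PMI}, including the finiteness requirement on its vectors. I would not expect a genuine obstacle here: the whole content is the single structural observation that requiring two follower in-neighbors per leader is precisely what blocks every zero forcing move while still leaving an extra distance-to-leaders vector beyond the $m$ trivial ones contributed by the leaders themselves.
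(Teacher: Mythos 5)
Your proposal is correct and follows essentially the same route as the paper's own proof: argue that the two white follower in-neighbors per leader stall the zero forcing process so that $\zeta(G,V_\ell)=m$, then exhibit a PMI sequence of length $m+1$ consisting of the leaders' DL vectors (each satisfying the PMI property at its own coordinate via the unique self-distance $0$) followed by the DL vector of a follower at distance one from a leader. Your write-up is in fact slightly more careful than the paper's on the finiteness requirement and the vacuous PMI condition for the last vector, but the underlying argument is identical.
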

\begin{proof}
If every leader has incoming links from at least two followers, then none of the followers will be forced when only the leaders are the black nodes. Accordingly, the $dset(G,V_\ell)=V_\ell$ and $
\zeta(G,V_\ell)= m$.
On the other hand, we can always find a PMI sequence of DL vectors whose length is greater than $m$ in such a case. As an example, consider the following sequence that has a length of $m+1$: 1) start with the DL vectors of leaders in any order, 2) add the DL vector of a follower who has a distance of one to one of the leaders. Since each leader is the only node who has a distance of zero to itself, those self-distance entries can be selected as the entries that satisfy the PMI rule. Hence, the longest possible PMI sequence would have a length of at least $m+1$, which implies $\delta(G,V_\ell) > \zeta(G,V_\ell)$.
\end{proof}

Our next result shows that for any single-leader network where each follower has a finite distance to the leader, $\delta(G,V_\ell) < n$ ensures that $\delta(G,V_\ell) > \zeta(G,V_\ell)$. 
\begin{theorem}
\label{P>ZZ}
For any $G = (V,E)$ with $n$ nodes and a single leader $v_l \in V$ such that $d(v_i,v_l)<\infty$ for all $v_i  \in V$,
\begin{equation}
\label{P>ZZ1}
\delta(G,V_\ell) < n \Rightarrow \delta(G,V_\ell)> \zeta(G,V_\ell).
\end{equation}
\end{theorem}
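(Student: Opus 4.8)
My plan rests on the observation that with a single leader the distance\nobreakdash-to\nobreakdash-leaders ``vectors'' are scalars, so the PMI condition degenerates to plain strict monotonicity, together with a structural description of the zero forcing process started from $\{v_l\}$, which in this case is essentially ``linear''.

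First I would reduce $\delta(G,V_\ell)$ to an eccentricity. With $m=1$ the only admissible value of $\pi(i)$ is $1$, so a sequence of DL scalars is PMI precisely when it is strictly increasing in the value $d(\cdot,v_l)$; since all these distances are finite, following a shortest path from a farthest vertex back to $v_l$ shows that every integer in $\{0,1,\dots,r\}$ occurs as some $d(v_i,v_l)$, where $r=\max_{v_i\in V}d(v_i,v_l)$. Hence a longest strictly increasing sequence has length exactly $r+1$, so $\delta(G,V_\ell)=r+1$ and the hypothesis $\delta(G,V_\ell)<n$ becomes $r+1<n$. Next I would dispose of the case in which $v_l$ has two or more in-neighbors: then no vertex is infected in the first round, hence never, so $\zeta(G,V_\ell)=1$, and since $n\ge2$ forces $r\ge1$ we already get $\zeta(G,V_\ell)=1<r+1=\delta(G,V_\ell)$. (Note $v_l$ must have at least one in-neighbor once $n\ge2$, because every vertex has a finite distance to $v_l$.)

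The substantive case is that $v_l$ has exactly one in-neighbor. Since the derived set does not depend on the order of forcing, I would run the process in the natural order and prove by induction that after $t$ forcing steps the black set is $\{u_0,u_1,\dots,u_t\}$ with $u_0=v_l$, where for each $j\le t$ the vertex $u_j$ is forced by $u_{j-1}$, satisfies $d(u_j,v_l)=j$, and is the \emph{unique} vertex at distance $j$ from $v_l$. The inductive step uses two facts: (i) each earlier vertex $u_j$ with $j<t$ is already ``saturated'' -- at the moment it forced $u_{j+1}$ all its other in-neighbors were black, hence all in-neighbors of $u_j$ lie in $\{u_0,\dots,u_{j+1}\}$ and $u_j$ can never force again -- so only $u_t$ can force next; and (ii) every vertex at distance $t+1$ is an in-neighbor of the unique distance-$t$ vertex $u_t$ (take the first edge of a shortest path), whereas the vertex $u_t$ forces is, by definition, its unique white in-neighbor, so that vertex is the unique vertex at distance $t+1$, and its distance is exactly $t+1$ since it is white while every vertex at a smaller distance is already black. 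Granting this description, $\text{dset}(G,\{v_l\})=\{u_0,\dots,u_{\zeta-1}\}$ with the last vertex at distance $\zeta-1\le r$, so $\zeta(G,V_\ell)\le r+1$; and if equality held the derived set would contain the unique vertex at each distance $0,1,\dots,r$, forcing (since every vertex of $G$ has distance in that range) $V=\{u_0,\dots,u_r\}$ and $n=r+1=\delta(G,V_\ell)$, contradicting $\delta(G,V_\ell)<n$. Therefore $\zeta(G,V_\ell)\le r<r+1=\delta(G,V_\ell)$.

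The step I expect to be the main obstacle is making the induction in the last case fully rigorous: carefully showing that the forcing order really is linear (no branching -- each distance-$(t+1)$ vertex forced exactly once, and only by $u_t$) and that the ``unique vertex at distance $j$'' property propagates. Once that structural picture of $\text{dset}(G,\{v_l\})$ is in hand, the remaining pieces -- the identity $\delta(G,V_\ell)=r+1$, the easy multi-in-neighbor case, and the final counting contradiction -- are routine.
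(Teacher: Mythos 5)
Your proof is correct and follows essentially the same route as the paper's: reduce $\delta(G,V_\ell)$ to one plus the leader's eccentricity, dispose of the case where $v_l$ has two or more in-neighbors (forcing stalls and $\zeta(G,V_\ell)=1$), and show that otherwise the zero-forcing chain must consist of a unique vertex at each distance, so that $\zeta(G,V_\ell)\geq\delta(G,V_\ell)$ can only occur when both equal $n$. Your explicit induction on the forcing front is just a more rigorous rendering of the paper's informal step of ``repeating the same reasoning by removing $v_l$ and treating its unique in-neighbor as the new leader.''
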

\begin{proof}
Since the left side of \eqref{P>ZZ1} can never be true for $n=1$, we focus on networks with $n\geq2$ and we will prove the claim via contradiction.  Suppose that $\delta(G,V_\ell) < n$ and $\zeta(G,V_\ell)\geq \delta(G,V_\ell)$. Note that if $v_l$ has more than one follower as in-neighbor, then the zero forcing process starting with the input set $\{v_l\}$ would not propagate and we would have $\zeta(G,V_\ell)=1$. Furthermore, for any network with a single leader $v_l \in V$ such that $d(v_i,v_l)<\infty$ for all $v_i  \in V$,
\begin{equation}
\label{P>ZZ2}
\delta(G,\{v_l\}) = \max_{v_i \in V}d(v_l,v_i)+1,
\end{equation}
which is always greater than one. Hence, if $\zeta(G,V_\ell)\geq \delta(G,V_\ell)$, then $v_l$ must have only one in-neighbor, say $v_i$, who will be infected by $v_l$ under the zero forcing process. Now, if $n=2$ (there are no other followers), then we end up with $\delta(G,V_\ell) = \zeta(G,V_\ell) =2$, which contradicts with $\delta(G,V_\ell) < n$. On the other hand, if $n>2$ then we can repeat the same reasoning by removing $v_l$ from the network, since $v_l$ has no impact on the infection of nodes at distance of two or more from itself, and treating the remaining network as a system with a single leader $v_i$ with $d(v_j,v_i)<\infty$ for every $v_j\neq v_l$ ($v_i$ being the only in-neighbor of $v_l$ implies that the paths from all other nodes to $v_l$ goes through $v_i$, hence $d(v_j,v_i)<\infty$).  Accordingly, we can show that if $\zeta(G,V_\ell)\geq \delta(G,V_\ell)$, then each follower must have a distinct distance from $v_l$, which implies $\delta(G,V_\ell) = \zeta(G,V_\ell) =n$ and results in a contradiction with $\delta(G,V_\ell) < n$.
\end{proof}

\begin{remark}
\label{remsingle}
In light of \eqref{P>ZZ2}, the only connected undirected network with a single-leader that yields $\delta(G,V_\ell) = n$ is a path graph with a terminal node being the leader. Hence, Theorem \ref{P>ZZ} implies that for all other connected undirected networks with a single-leader, we have $\delta(G,V_\ell) > \zeta(G,V_\ell)$.
\end{remark}


\subsection{Advantages of Using the Zero-forcing-based Bound}
Here, we show that one major advantage of using the zero-forcing-based approach is that it is better at verifying complete strong structural controllability. More specifically, we show that if $\delta(G,V_\ell)  = n$, then  $V_\ell$ must be a zero forcing set. Note that the converse is not true in general, i.e., it is possible to have a zero forcing set $V_\ell$ such that $\delta(G,V_\ell)  < n$, as already shown by the example in Fig. \ref{fig:pvszex}b. Clearly, such examples do not exist for single-leader networks due to Theorem \ref{P>ZZ}.

\begin{theorem}
\label{fullPMI-ZFS}
For any graph $G = (V,E)$ with $n$ nodes and any set of $m$ leaders $V_\ell \subseteq V$,
\begin{equation}
\label{FPZ1}
\delta(G,V_\ell)  = n \Rightarrow  \zeta(G,V_\ell) = n.
\end{equation}
\end{theorem}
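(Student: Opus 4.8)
I would prove the (formally stronger) statement that $V_\ell$ is a zero forcing set, which at once gives $\zeta(G,V_\ell)=|\text{dset}(G,V_\ell)|=n$. Fix a PMI sequence realizing $\delta(G,V_\ell)=n$; since the vectors in a PMI sequence are automatically distinct, it amounts to an ordering $u_1,u_2,\ldots,u_n$ of all of $V$, with coordinates $\pi(1),\ldots,\pi(n)$, and I set $\rho_i=d(u_i,\ell_{\pi(i)})$, choosing $\pi(n)$ at a finite coordinate so that every $\rho_i$ is finite (for $i<n$ this is automatic: the PMI inequality makes $d(u_i,\ell_{\pi(i)})$ strictly less than something, hence finite). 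For a leader index $c$ and $\rho\ge 0$, write $B_c(\rho)=\{v\in V:d(v,\ell_c)\le\rho\}$. The single consequence of the PMI property I would use is: the inequality $d(u_i,\ell_{\pi(i)})<d(u_j,\ell_{\pi(i)})$ for all $j>i$ says precisely that no $u_j$ with $j>i$ lies in $B_{\pi(i)}(\rho_i)$, i.e. $B_{\pi(i)}(\rho_i)\subseteq\{u_1,\ldots,u_i\}$; and since $d(u_i,\ell_{\pi(i)})=\rho_i$ exactly, this further gives $B_{\pi(i)}(\rho_i-1)\subseteq\{u_1,\ldots,u_{i-1}\}$ when $\rho_i\ge 1$.

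Then I would argue by induction on $i$ that the zero forcing process started from the black set $V_\ell$ eventually colors all of $\{u_1,\ldots,u_i\}$ black. For $i=1$: $\ell_{\pi(1)}\in B_{\pi(1)}(\rho_1)\subseteq\{u_1\}$ forces $\rho_1=0$ and $u_1=\ell_{\pi(1)}\in V_\ell$, which is black initially.

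For the inductive step, assume $\{u_1,\ldots,u_{i-1}\}$ is eventually all black. If $\rho_i=0$ then $u_i$ is a leader and there is nothing to do; otherwise take a shortest path $u_i\to v\to\cdots\to\ell_{\pi(i)}$ and let $v$ be its second vertex, so $d(v,\ell_{\pi(i)})=\rho_i-1$ and $u_i$ is an in-neighbor of $v$. Now (i) $v\in B_{\pi(i)}(\rho_i-1)\subseteq\{u_1,\ldots,u_{i-1}\}$, so $v$ is eventually black, and (ii) any in-neighbor $w$ of $v$ satisfies $d(w,\ell_{\pi(i)})\le 1+d(v,\ell_{\pi(i)})=\rho_i$, hence $w\in B_{\pi(i)}(\rho_i)\subseteq\{u_1,\ldots,u_i\}$; thus once $\{u_1,\ldots,u_{i-1}\}$ is black, the only in-neighbor of $v$ that can still be white is $u_i$ itself. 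At that moment $v$ is black with at most one white in-neighbor, and since $u_i$ \emph{is} an in-neighbor of $v$, either $u_i$ is already black or $v$ forces it; in both cases $u_i$ becomes black. Monotonicity of the coloring process then gives that $\{u_1,\ldots,u_i\}$ is eventually all black, and taking $i=n$ shows $\text{dset}(G,V_\ell)=V$, i.e. $\zeta(G,V_\ell)=n$.

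The part I expect to need the most care is the orientation bookkeeping — aligning the ``in-neighbor'' of the zero forcing rule with the direction in which the distances $d(\cdot,\ell_c)$ are measured — together with the observation that the PMI coordinate of $u_i$ is exactly what certifies $B_{\pi(i)}(\rho_i)\subseteq\{u_1,\ldots,u_i\}$. I would also flag why the obvious indirect route fails: if one assumes $V_\ell$ is not a zero forcing set, takes the PMI-last vertex of the stalled set of white nodes, and tries to contradict its PMI coordinate, the second in-neighbor guaranteed by the stalled-set condition again lies inside the stalled set, hence appears earlier in the PMI order and contradicts nothing. The constructive induction above sidesteps this.
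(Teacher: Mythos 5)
Your proof is correct and is essentially the same argument as the paper's: both take the next node $v$ on a shortest path from $u_i$ to $\ell_{\pi(i)}$, note that $v$ appears earlier in the PMI order, and use the PMI inequality to show every in-neighbor of $v$ other than $u_i$ also appears earlier, so $v$ forces $u_i$. Your ball-containment phrasing and direct induction merely streamline the paper's contradiction-based formulation (and you handle the $i=n$ finiteness issue, which the paper treats as a separate case, a bit more cleanly).
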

\begin{proof} The claim is trivial for the cases when $V_\ell = V$ since $\delta(G,V)  = \zeta(G,V) = n$. Hence we focus on $V_\ell \subset V$ ($n>m$) in the proof. Let $\calD=[\calD_1 \; \calD_2 \; \cdots \; \calD_n]$ be a PMI sequence consisting of all the distance-to-leaders (DL) vectors such the first  $|V_\ell|$ vectors belong to the leaders. Note that there is no loss of generality here since for any PMI sequence of DL vectors, the vectors belonging to the leaders can be moved to the beginning of the sequence and the distance of each leader to itself (zero) satisfies the PMI rule. Without any loss of generality, let the nodes be re-labeled based on the order of their DL vectors in the sequence, i.e., $\calD_i$ is the DL vector of $v_i \in V$ for all $i=1, 2, \hdots, n$. Furthermore, let $\pi(i)$ denote the dimension of $\calD_i$ that satisfies the PMI rule, i.e., 
\begin{equation}
\label{FPZ2}
[\calD_i]_{\pi(i)} < [\calD_{j}]_{\pi(i)}, \;\;\forall j > i.
\end{equation}
Due to Lemma 4.1 in \cite{yazicioglu2016graph}, if $\calD$ is the longest possible PMI sequence of DL vectors, then it must satisfy
\begin{equation}
\label{FPZ3}
[\calD_i]_{\pi(i)} = \min_{j \geq i} [\calD_{j}]_{\pi(i)}, \forall i \in \{ 1, \hdots, n-1\}.
\end{equation}
For each $i \in \{m+1, \hdots, n\}$, let $W_i =\{v_i, \hdots, v_n\} \subseteq V$ be the owners of the DL vectors in the subsequence of $\calD$ starting with the $i^{th}$ entry. We will show that 
\begin{equation}
\label{FPZ4}
\forall i> m, \exists k<i : \mathcal{N}_k \cap W_i = \{v_i\},
\end{equation}
where $\mathcal{N}_k$ is the set of  in-neighbors of $v_k$. Note that \eqref{FPZ4} would imply that if all the nodes $\{v_1, \hdots, v_{i-1}\}$ are infected, then $v_i$ becomes infected under the zero-forcing process. Accordingly, we can conclude that $\zeta(G,V_\ell) = n$ since starting with all the leaders being infected, all the followers would eventually become infected.

Note that \eqref{FPZ4} clearly holds for $i=n$ since $W_n = \{v_n\}$ and $v_n$ must have at least one out-neighbor in $\{v_1, \hdots, v_{n-1}\}$ as otherwise its DL vector would be all $\infty$ and not included in any PMI sequence, leading to the contradiction $\delta(G,V_\ell)<n$. Now, for the sake of contradiction, suppose that \eqref{FPZ4} is not true for some $i \in \{m+1, \hdots, n-1\}$. Let $v_k$ be any out-neighbor of $v_i$ such that
\begin{equation}
\label{FPZ5}
[\calD_k]_{\pi(i)}=[\calD_i]_{\pi(i)}-1.
\end{equation}
Clearly such a neighbor always exists: $v_k$ is either the leader $l_{\pi(i)}$ or another follower on the shortest path from $v_i$ to $l_{\pi(i)}$. Furthermore, $k<i$ due to \eqref{FPZ2}. Now suppose that $v_k$ has another in-neighbor $v_j$ such that $j>i$. Then, 
\begin{equation}
\label{FPZ6}
[\calD_j]_{\pi(i)}\leq[\calD_k]_{\pi(i)}+1 = [\calD_i]_{\pi(i)},
\end{equation}
which contradicts with \eqref{FPZ2}. Hence, \eqref{FPZ4} must be true, and it implies that $\zeta(G,V_\ell) = n$.
\end{proof}


\section{Combined Bound: $\delta(G, \text{dset}(G,V_\ell))$}
\label{combine}
Our analysis so far has shown that both the distance-based bound, $\delta(G,V_\ell)$, and the zero-forcing-based bound, $\zeta(G,V_\ell)$, have their own merits.  Given these results, it is only natural to ask if it is possible to find a novel bound that combines the strengths of distance-based and zero-forcing-based methods. In this regard, one trivial approach is taking the maximum of the two bounds.
While guaranteed to be at least as good as either of the bounds alone, this approach does not reveal any additional information compared to the two original bounds. In this section, we present a novel bound that fuses the strengths of distance-based and zero-forcing-based approaches. More specifically, we show that the length of the longest PMI sequence of distances to the derived set of leaders, i.e.,
\begin{equation}
\label{mbound2}
\delta(G, \text{dset}(G,V_\ell)),
\end{equation}
provide a tight lower bound on the dimension of SSCS. We show that this novel bound is always at least as good as, and sometimes greater than, either of the bounds alone. To this end, we first provide a result on the invariance of SSCS in diffusively coupled networks to the addition of every node in the derived set,  $\text{dset}(G,V_\ell)$, as leaders.

\begin{theorem}\cite{monshizadeh2015strong}
For any network $G=(V,E)$ with the leaders $V_\ell \subseteq V$, and any weight function $w:E\rightarrow \mathbb{R}^+$,
\begin{equation}
\label{dteq}
range(\Gamma(L_w,V_\ell))= range(\Gamma(L_w,\text{dset}(G,V_\ell))), 
\end{equation}
where $range(\Gamma)$ is the range space of controllability matrix.
\end{theorem}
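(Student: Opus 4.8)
The plan is to give an intrinsic description of the controllable subspace and to show it is unaffected by a single step of zero forcing, then chain the steps together. Recall the standard Cayley--Hamilton fact that $\text{range}\,\Gamma(L_w, V')$ — the reachable subspace — equals $\text{span}\{(-L_w)^k e_i : k \ge 0,\ v_i \in V'\}$, i.e.\ the smallest $(-L_w)$-invariant subspace of $\mathbb{R}^n$ containing the standard basis vectors indexed by $V'$. Write $\mathcal{C}(V') := \text{range}\,\Gamma(L_w, V')$. Since enlarging $V'$ only appends columns to $B$, the inclusion $\mathcal{C}(V_\ell) \subseteq \mathcal{C}(\text{dset}(G,V_\ell))$ is automatic, so the whole content of the theorem is the reverse inclusion.

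The key step is a one-step lemma: if $V' \subseteq V$, $v_p \in V'$, and $v_q \notin V'$ is the \emph{only} in-neighbor of $v_p$ lying outside $V'$, then $\mathcal{C}(V' \cup \{v_q\}) = \mathcal{C}(V')$. To see this, note that the $p$-th column of $-L_w$ is
\begin{equation*}
(-L_w)\,e_p \;=\; -\Big(\textstyle\sum_{k:\, e_{pk}\in E} w(e_{pk})\Big)\, e_p \;+\; \sum_{v_i:\, e_{ip}\in E} w(e_{ip})\, e_i ,
\end{equation*}
the last sum ranging over the in-neighbors of $v_p$. Isolating the unique outside in-neighbor $v_q$ and solving for $e_q$ yields
\begin{equation*}
w(e_{qp})\, e_q \;=\; (-L_w)\,e_p \;+\; \Big(\textstyle\sum_{k:\, e_{pk}\in E} w(e_{pk})\Big)\, e_p \;-\!\!\! \sum_{\substack{v_i:\, e_{ip}\in E\\ v_i \ne v_q}} \!\!\! w(e_{ip})\, e_i .
\end{equation*}
Every term on the right lies in $\mathcal{C}(V')$: $e_p \in \mathcal{C}(V')$ since $v_p \in V'$, hence $(-L_w)e_p \in \mathcal{C}(V')$ by $(-L_w)$-invariance, and each remaining in-neighbor $v_i$ of $v_p$ lies in $V'$, so $e_i \in \mathcal{C}(V')$. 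Since $w(e_{qp}) > 0$ — here positivity of the weights is essential — we may divide through and obtain $e_q \in \mathcal{C}(V')$. As $\mathcal{C}(V')$ is $(-L_w)$-invariant and contains $\{e_i : v_i \in V'\} \cup \{e_q\}$, it contains $\mathcal{C}(V' \cup \{v_q\})$; the opposite inclusion is trivial, so the two coincide.

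Finally I would chain the steps along the zero-forcing process. Writing the derived set as $V_\ell = S_0 \subseteq S_1 \subseteq \cdots \subseteq S_T = \text{dset}(G, V_\ell)$ with $S_{t+1} = S_t \cup \{u_t\}$, where $u_t$ is forced by some $v_{p_t} \in S_t$ (so $u_t$ is the unique in-neighbor of $v_{p_t}$ outside $S_t$), the one-step lemma gives $\mathcal{C}(S_{t+1}) = \mathcal{C}(S_t)$ for every $t$; composing these equalities yields $\mathcal{C}(V_\ell) = \mathcal{C}(\text{dset}(G,V_\ell))$, which is \eqref{dteq}.

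The main obstacle is getting the one-step lemma exactly right: the crux is recognizing that the zero-forcing rule — \emph{exactly one} white in-neighbor — is precisely the combinatorial condition under which the single column $(-L_w)e_p$ expresses the lone unknown $e_q$ as a linear combination of vectors already in $\mathcal{C}(V')$, and that strict positivity of $w(e_{qp})$ is what permits solving for $e_q$ (an unrestricted zero/nonzero pattern would not suffice, which is why the statement is special to the diffusive/weighted-Laplacian setting). The Cayley--Hamilton characterization of $\mathcal{C}(V')$ and the outer chaining are routine.
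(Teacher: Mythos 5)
Your proof is correct. Note, however, that the paper does not actually prove this statement: it simply defers to Lemma~4.1 of the cited reference \cite{monshizadeh2015strong}, which establishes the invariance of the controllable subspace under zero forcing for a more general class of pattern matrices containing the weighted Laplacians as a subset. Your argument is a self-contained reconstruction specialized to the diffusive setting, and it is the right one: the identification of $\mathrm{range}\,\Gamma(L_w,V')$ with the smallest $(-L_w)$-invariant subspace containing $\{e_i: v_i\in V'\}$, the one-step lemma extracting $e_q$ from the column $(-L_w)e_p$ when $v_q$ is the lone white in-neighbor of the black node $v_p$, and the induction along the forcing chain $S_0\subseteq\cdots\subseteq S_T$ together give exactly \eqref{dteq}; the indexing of $L_w$ against the paper's edge/in-neighbor conventions also checks out. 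One small imprecision: what the division step really needs is that $w(e_{qp})\neq 0$, not that it is positive --- this is precisely why the cited lemma holds for the broader class of structured matrices whose entries on edges are constrained to be nonzero. Your parenthetical that "an unrestricted zero/nonzero pattern would not suffice" is therefore slightly misleading, but it does not affect the validity of the proof in the paper's setting, where all admissible weights are strictly positive.
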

\begin{proof}
The proof follows from Lemma 4.1 in \cite{monshizadeh2015strong}, which shows a stronger condition, i.e., \eqref{dteq} holds for a set of state matrices that contain weighted Laplacians as a subset.  
\end{proof}

\begin{theorem}
\label{cbound}
Consider any network $G=(V,E)$ with the leaders $V_\ell \subseteq V$. Then,
\begin{equation}
\label{cbound1}
\delta(G,V_\ell), \zeta(G,V_\ell) \le \delta(G,\text{dset}(G,V_\ell)) \le \gamma(G,V_\ell).
\end{equation}
\end{theorem}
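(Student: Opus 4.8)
The plan is to establish the chain in \eqref{cbound1} by proving three inequalities. The rightmost inequality, $\delta(G,\text{dset}(G,V_\ell)) \le \gamma(G,V_\ell)$, is the substantive new claim; the other two are comparatively easy. For the leftmost inequality, $\delta(G,V_\ell) \le \delta(G,\text{dset}(G,V_\ell))$, I would argue that since $V_\ell \subseteq \text{dset}(G,V_\ell)$, adding nodes to the leader set can only decrease (weakly) each distance-to-leaders entry and increases the dimension of the DL vectors; any PMI sequence witnessing $\delta(G,V_\ell)$ can be reinterpreted as a PMI sequence for the larger leader set by restricting attention to the original coordinates, since the PMI property at a coordinate $\pi(i)$ is preserved when that coordinate's distances are unchanged and only new coordinates are appended. (One should check the minor subtlety that distances to nodes of $\text{dset}(G,V_\ell)$ that were not original leaders do not disrupt the argument — but since we only need to \emph{keep} the original coordinates, this is immediate.) For $\zeta(G,V_\ell) \le \delta(G,\text{dset}(G,V_\ell))$: by definition $\zeta(G,V_\ell) = |\text{dset}(G,V_\ell)|$, and if we treat $\text{dset}(G,V_\ell)$ itself as the leader set, then every one of these $|\text{dset}(G,V_\ell)|$ nodes has distance zero to itself and to no other node in the set, so the sequence of their own DL vectors (in any order) is trivially PMI at the self-distance coordinates, giving $\delta(G,\text{dset}(G,V_\ell)) \ge |\text{dset}(G,V_\ell)| = \zeta(G,V_\ell)$.

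For the main inequality, the key is Theorem (the $range$-invariance result, \eqref{dteq}): for every feasible weight $w$, $range(\Gamma(L_w,V_\ell)) = range(\Gamma(L_w,\text{dset}(G,V_\ell)))$, hence $\text{rank}\,\Gamma(L_w,V_\ell) = \text{rank}\,\Gamma(L_w,\text{dset}(G,V_\ell))$ for all $w$. Taking the minimum over $w$ on both sides gives $\gamma(G,V_\ell) = \gamma(G,\text{dset}(G,V_\ell))$. Then I apply Theorem~\ref{thm:PMI} (the distance-based bound) to the network $G$ with leader set $\text{dset}(G,V_\ell)$: this yields $\delta(G,\text{dset}(G,V_\ell)) \le \gamma(G,\text{dset}(G,V_\ell)) = \gamma(G,V_\ell)$, which is exactly the rightmost inequality. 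One must note that the $\infty$-entry caveat in Theorem~\ref{thm:PMI} is handled the same way as there — we only count DL vectors with at least one finite entry.

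The main obstacle, such as it is, is making precise the claim that a PMI sequence for $V_\ell$ lifts to a PMI sequence for $\text{dset}(G,V_\ell)$ in the leftmost inequality — specifically, ensuring the re-indexing/coordinate-selection $\pi(\cdot)$ is consistent when the ambient dimension of the DL vectors grows. I expect this to be routine: the original coordinates correspond to the original leaders, whose pairwise distances and distances from all followers are unchanged by enlarging the leader set, so the witnessing coordinate $\pi(i)$ for each vector in the old sequence still satisfies \eqref{eq:PMIcondition} in the new sequence verbatim. A secondary point worth a sentence is that strictness in \eqref{cbound1} does occur — e.g., when $V_\ell$ is not a zero-forcing set but $\text{dset}(G,V_\ell) \supsetneq V_\ell$ enriches the distance structure enough to lengthen the PMI sequence beyond both $\delta(G,V_\ell)$ and $\zeta(G,V_\ell)$ — but this belongs to the discussion/examples following the theorem rather than the proof itself.
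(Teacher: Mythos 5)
Your proposal is correct and follows essentially the same route as the paper: the rightmost inequality via the range-invariance identity $\gamma(G,V_\ell)=\gamma(G,\text{dset}(G,V_\ell))$ combined with Theorem~\ref{thm:PMI}, the bound $\zeta(G,V_\ell)\le\delta(G,\text{dset}(G,V_\ell))$ from the uniquely-zero self-distances of the derived-set nodes, and the leftmost inequality by lifting a PMI sequence for $V_\ell$ to the enlarged leader set while retaining the original witnessing coordinates. The only nit is your phrase that enlarging the leader set ``can only decrease each distance-to-leaders entry''---the original coordinates are in fact unchanged and new coordinates are merely appended---but your subsequent argument uses exactly this correct fact, so nothing is affected.
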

\begin{proof}
First, we show that $\delta(G,\text{dset}(G,V_\ell)) \le \gamma(G,V_\ell)$.
In light of \eqref{eq:dim_SSC} and \eqref{dteq},
\begin{equation}
\label{cbound2}
\gamma(G,\text{dset}(G,V_\ell)) = \gamma(G,V_\ell).
\end{equation}
Due to Theorem \ref{thm:PMI},
\begin{equation}
\label{cbound3}
\delta(G,\text{dset}(G,V_\ell)) \le \gamma(G,\text{dset}(G,V_\ell)).
\end{equation}
Using \eqref{cbound2} and \eqref{cbound3}, we get ${\delta(G,\text{dset}(G,V_\ell)) \le \gamma(G,V_\ell)}$.

Next, we show that $\delta(G,\text{dset}(G,V_\ell)) \ge \zeta(G,V_\ell)$. Since the DL vectors of leaders can always be included in the beginning of a PMI sequence (self-distances are uniquely zero), $\delta(G,V') \geq |V'|$ for any $V' \subseteq V$. Hence, 
\begin{equation}
\label{cbound4}
\delta(G,\text{dset}(G,V_\ell)) \ge |\text{dset}(G,V_\ell)|= \zeta(G,V_\ell).
\end{equation}

Finally, we show that $\delta(G,\text{dset}(G,V_\ell)) \ge \delta(G,V_\ell)$. Since the initial set of infected nodes (input nodes) are always contained in the derived set, we have $V_\ell \subseteq \text{dset}(G,V_\ell)$. 
Accordingly, for any PMI sequence $\mathcal{D}$ of DL vectors under the leader set $V_\ell$, there is an equally long PMI sequence of DL vectors $\mathcal{D}'$ under the leader set $\text{dset}(G,V_\ell)$, which has the DL vectors of the same nodes in the same order as $\mathcal{D}$.  Hence, the longest possible PMI sequence of DL vectors with the additional leaders can not be shorter, i.e.,
\begin{equation}
\label{cbound5}
\delta(G,\text{dset}(G,V_\ell)) \ge \delta(G,V_\ell).
\end{equation}
\end{proof}

\begin{remark}
While Theorem \ref{cbound} shows that the combined bound is at least as good as the distance-based and zero-forcing-based bounds, it should also be emphasized that there exist networks $G=(V,E)$ and leader sets $V_\ell \subseteq V$, where the combined bound is strictly better than the two original bounds, i.e.,  
 $\delta(G,\text{dset}(G,V_\ell))>\delta(G,V_\ell), \zeta(G,V_\ell)$. We provide two such examples in Fig. \ref{fig:cbex}.
\begin{figure}[htb]
\centering
\includegraphics[scale=0.9]{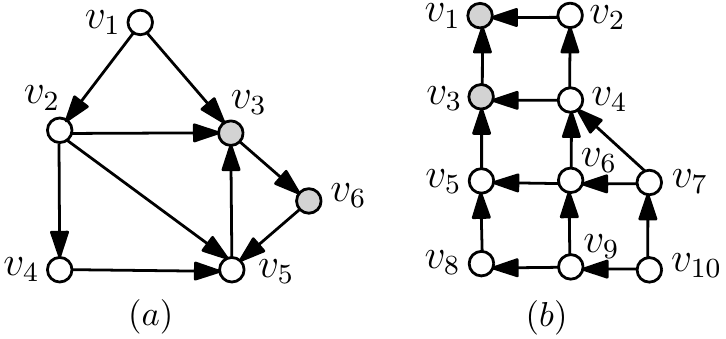}
\caption{Two networks and their leaders (gray). In (a): $\delta(G,\text{dset}(G,V_\ell))=5$, $\delta(G,V_\ell)=4$, $\zeta(G,V_\ell)=3$.  In (b): $\delta(G,\text{dset}(G,V_\ell))=9$, $\delta(G,V_\ell)=6$, $\zeta(G,V_\ell)=5$. 
}
\label{fig:cbex}
\end{figure}
\end{remark}

\section{Numerical Results}
\label{sims}
We compare the lower bounds on the dimension of strong structurally controllable subspace on Erd\"os-R\'enyi (ER) and Barab\'asi-Albert (BA) graphs. ER graphs are the ones in which any two nodes are adjacent with a probability $p$. BA graphs are obtained by adding nodes to an existing graph one at a time. Each new node is adjacent to $\varepsilon$ existing nodes that are chosen with probabilities proportional to their degrees.

In all the simulations, we consider undirected graphs with $n=100$ nodes. In Figs. \ref{fig:ER} and \ref{fig:BA}, we plot lower bounds on the dimension of SSCS, including $\delta(G,V_\ell)$, $\zeta(G,V_\ell)$ and $\delta(G,\text{dset}(V(_\ell))$, as a function of number of leaders $|V_\ell| = \ell$. We select the leader nodes randomly. Each point on the plots corresponds to the average of 100 randomly generated instances. While we computed the exact value of  $\zeta(G,V_\ell)$, we used the greedy approximation (underestimation) in \cite{shabbir2019computation} for computing $\delta(G,V_\ell)$ and $\delta(G,\text{dset}(V(_\ell))$ due to the large number of leaders. While this approximation was shown to be very close in \cite{shabbir2019computation}, the true gap between these two bounds and $\zeta(G,V_\ell)$ may be larger than shown in the plots. 

 In all the plots in Figs. \ref{fig:ER} and \ref{fig:BA}, we observe that the distance-based bound $\delta(G,V_\ell)$ starts above the ZFS-based bound $\zeta(G,V_\ell)$, which is expected due to Theorem \ref{P>ZZ} (or Remark \ref{remsingle}). Furthermore, $\delta(G,V_\ell)$ is usually significantly larger than $\zeta(G,V_\ell)$, especially when the number of leaders is small. This can be explained by Theorem \ref{P>Z} since most of the nodes in these networks have degrees of two or more. In the ER graphs the expected degree of each node is approximately $pn$, and each node in the BA graphs has a degree of $\varepsilon$ or more. Indeed, all the plots show a linear trend in $\zeta(G,V_\ell)$ when the number of leaders is small, indicating $\zeta(G,V_\ell)\approx|V_\ell|$. Note that when $\zeta(G,V_\ell)=|V_\ell|$, trivially $\delta(V,\text{dset}(V_\ell))=\delta(G,V_\ell)$, which explains why the distance-based and combined bounds mostly overlap until the number of leaders is sufficiently large and the zero-forcing-based bound departs from the initial linear regime. While the difference between the combined bound $\delta(V,\text{dset}(V_\ell))$ and $\delta(G,V_\ell)$ was observed to be  insignificant in these simulations, it is worth emphasizing that $\delta(V,\text{dset}(V_\ell))$ is the only bound guaranteed to be at least as good as the other two in all possible cases (Theorem \ref{cbound}) and the improvement with respect to $\delta(G,V_\ell)$ may be more significant for other families of networks. 
Finally, we see in all the plots that the three bounds approach each other as they all increase toward $n$, which is expected due to Theorem \ref{fullPMI-ZFS}.

\begin{figure}[!h]
\centering
\begin{subfigure}{0.23\textwidth}
\centering
\includegraphics[scale=0.295]{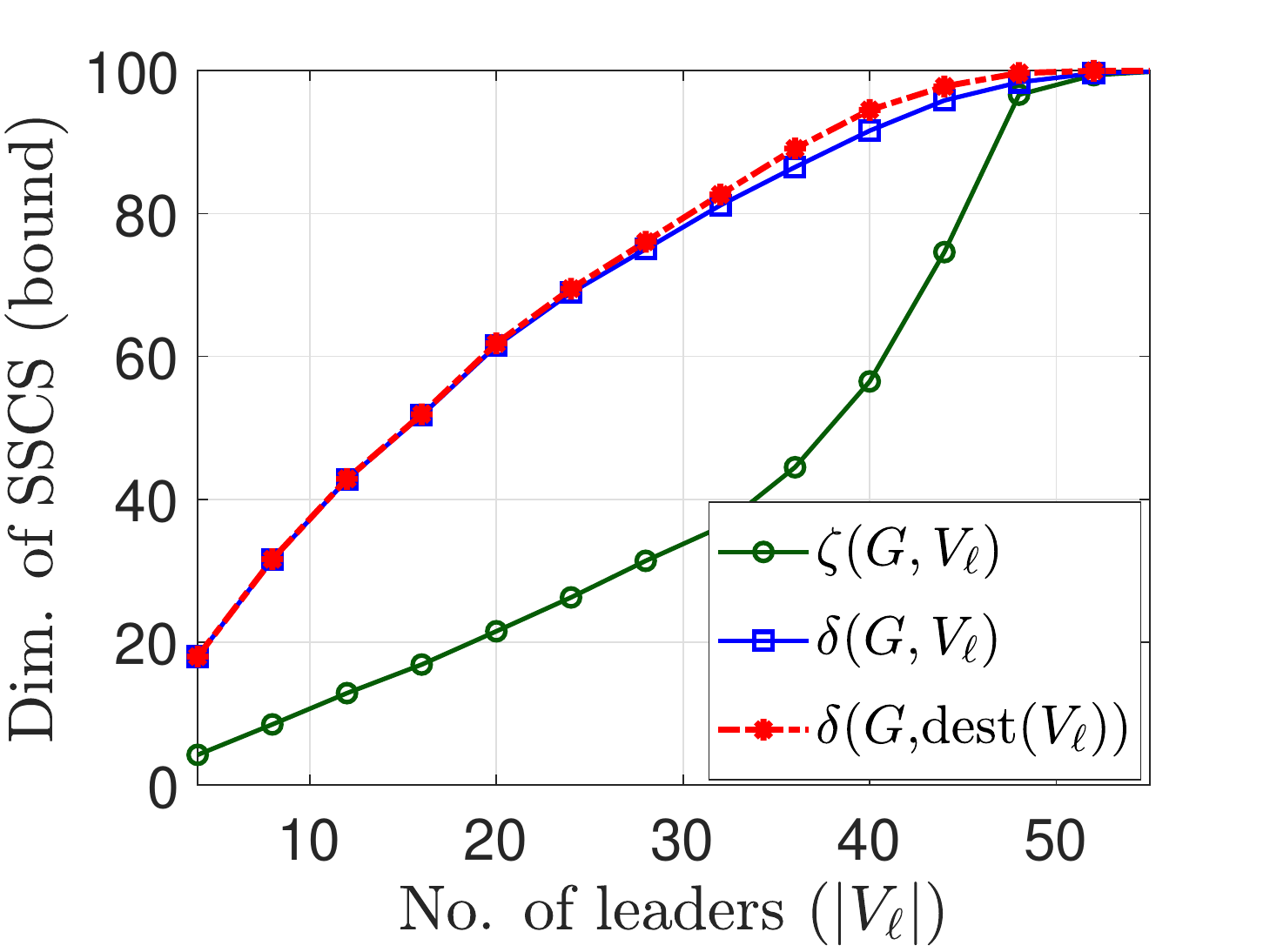}
\caption{$p=0.05$}
\end{subfigure}
\begin{subfigure}{0.22\textwidth}
\centering
\includegraphics[scale=0.29]{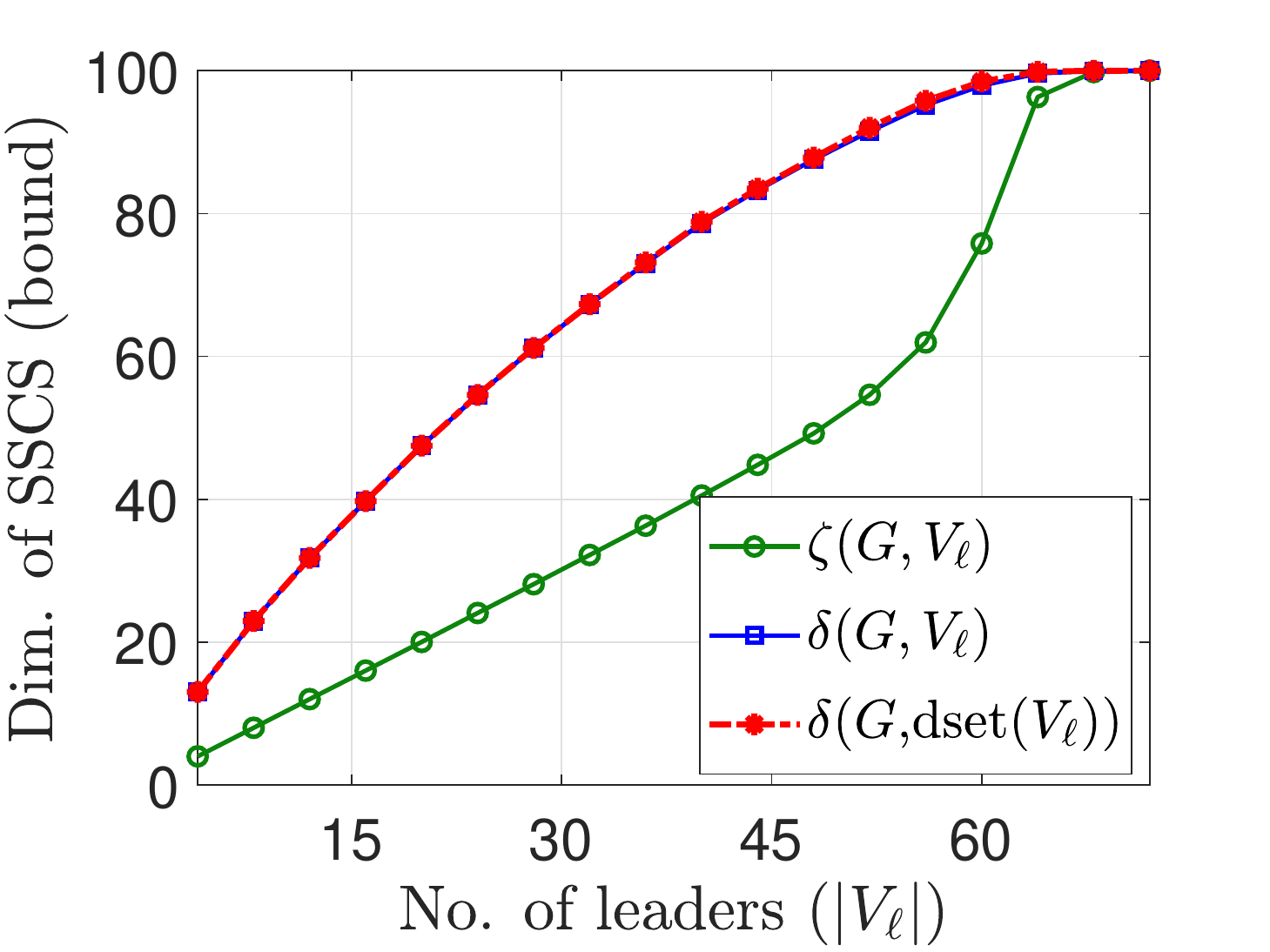}
\caption{$p=0.1$}
\end{subfigure}
\caption{Comparison of ZFS-based $\zeta(G,V_\ell)$, distance-based $\delta(G,V_\ell)$ and combined $\delta(G,\text{dset}(V_\ell))$ bounds on the dimension of SSCS in ER graphs.}
\label{fig:ER}
\end{figure}
\begin{figure}[htb]
\centering
\begin{subfigure}[b]{0.23\textwidth}
\centering
\includegraphics[scale=0.295]{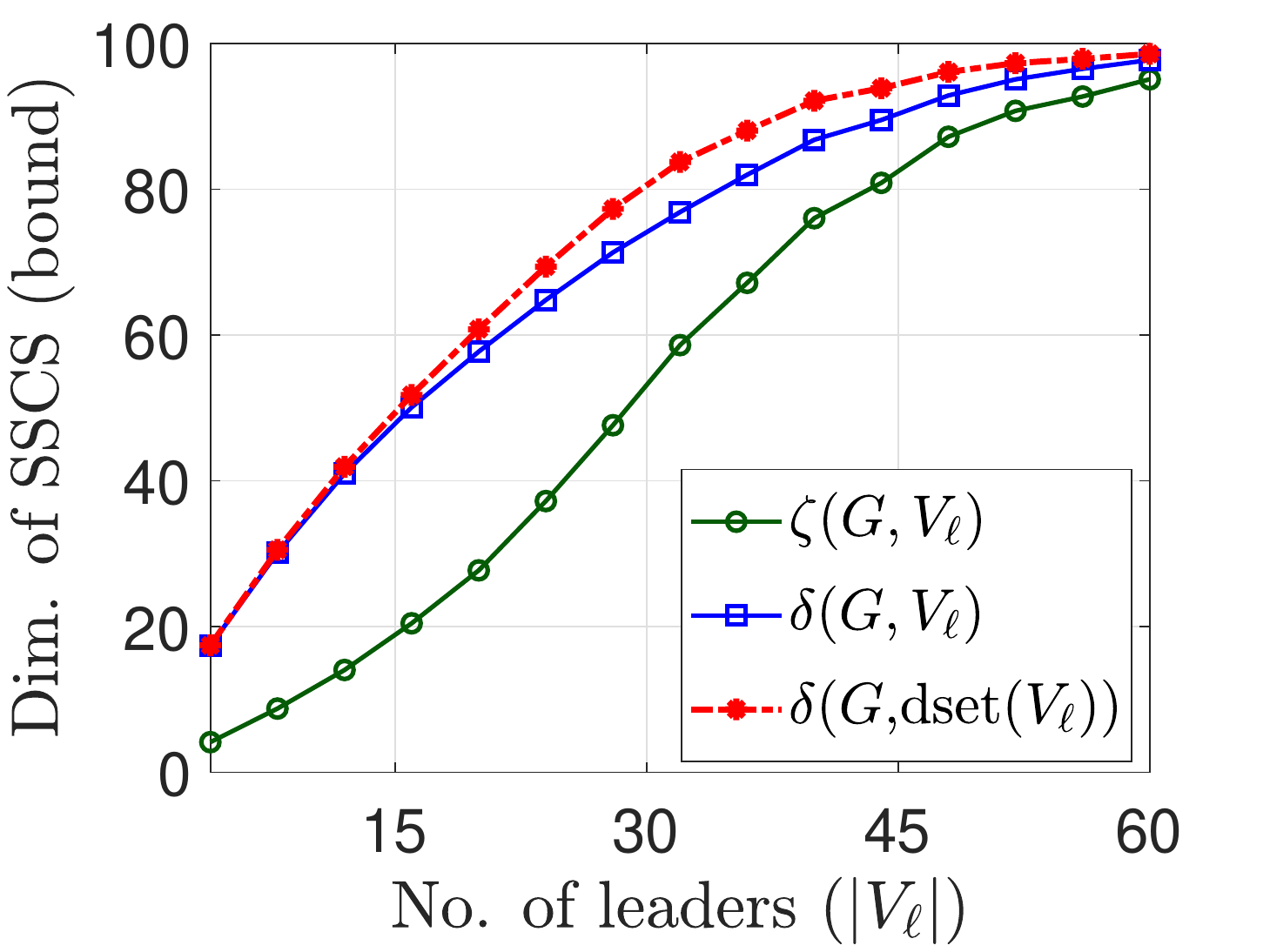}
\caption{$\varepsilon = 2$}
\end{subfigure}
\begin{subfigure}[b]{0.22\textwidth}
\centering
\includegraphics[scale=0.295]{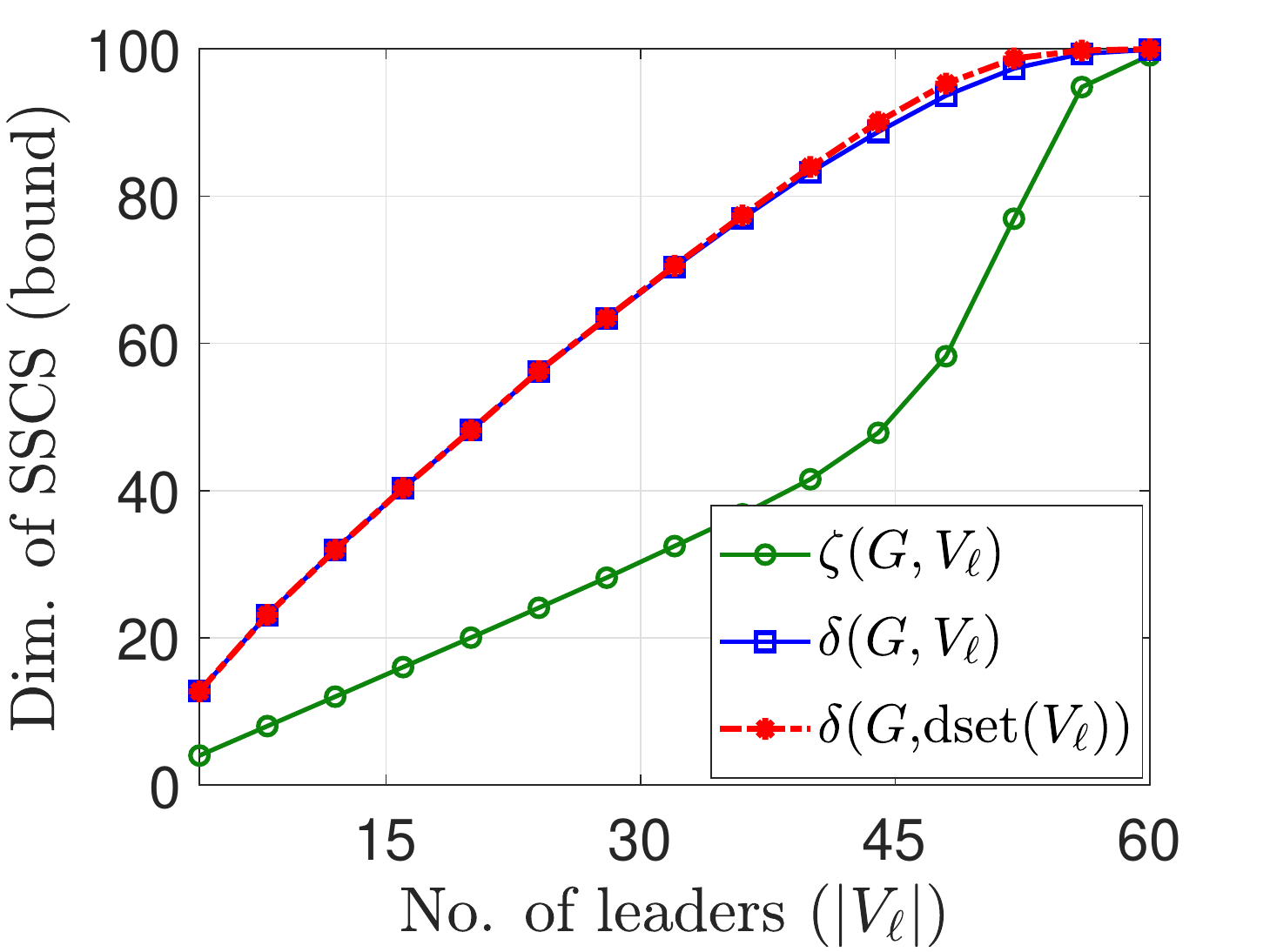}
\caption{$\varepsilon = 5$}
\end{subfigure}
\caption{Comparison of ZFS-based $\zeta(G,V_\ell)$, distance-based $\delta(G,V_\ell)$ and combined $\delta(G,\text{dset}(V_\ell))$ bounds on the dimension of SSCS in BA graphs.}
\label{fig:BA}
\end{figure}
\section{Conclusion}
\label{conc}

In this paper, we focused on the  the dimension of \emph{strong structurally controllable subspace (SSCS)} of networks under weighted Laplacian dynamics. We compared two tight lower bounds on the dimension of SSCS: one based on distances and the other based on zero forcing. We characterized various cases where the distance-based lower bound is guaranteed to be greater than the zero-forcing-based bound. On the other hand, we also show that, for any network of $n$ nodes, any set of leaders that makes the distance-based bound equal to $n$ is necessarily a zero forcing set. These results indicate that while the zero-forcing-based approach may be a better choice for verifying complete strong structural controllability, the distance-based approach is usually more informative when the leaders do not constitute a zero forcing set. We also present a novel bound based on the combination of these two approaches, which is always at least as good as, and in some cases strictly better than, the maximum of the two bounds. Finally, we numerically compared the bounds on various networks.

As a future direction, we plan to improve the proposed combined bound, for example by utilizing the invariance of controllable subspace to the addition/removal of links between leaders \cite{yazicioglu2013leader}. Obtaining a formal characterization of cases where the zero-forcing bound is guaranteed to be greater than the distance-based bound is another direction we plan to explore. Furthermore, the distance-based bound was recently utilized for analyzing the robustness-controllability trade-off in networks \cite{Abbas2019}. We intend to use the combined bound for further exploration of such trade-offs. 

\bibliographystyle{IEEEtran}
\bibliography{refer}
\end{document}